\def \b#1{\overline{#1}}
\def \t#1{\widetilde{#1}}
\def \h#1{\widehat{#1}}
\newtheorem{proposition}{Proposition}[section]
\numberwithin{equation}{section}
\begin{document}

\title{Darboux and binary Darboux transformations for discrete integrable systems \uppercase\expandafter{\romannumeral1}. Discrete potential KdV equation}
\author[1,2]{Ying Shi}
\author[2]{Jonathan J C Nimmo}
\author[1]{ Da-jun Zhang}
\affil[1]{Department of Mathematics, Shanghai University, Shanghai 200444,  P.R. China}
\affil[2]{Department of Mathematics, University of Glasgow, Glasgow G12 8QQ, UK}
\date{}
\maketitle

\begin{abstract}
The Hirota-Miwa equation can be written in `nonlinear' form in two ways: the discrete KP equation and, by using a compatible continuous variable, the  discrete potential KP equation.  For both systems, we consider the Darboux and binary Darboux transformations, expressed in terms of the continuous variable, and obtain exact solutions in Wronskian and Grammian form.  We discuss reductions of both systems to the discrete KdV and discrete potential KdV equations, respectively, and exploit this connection to find the Darboux and binary Darboux transformations and exact solutions of these equations.
\end{abstract}

\quad PACS numbers: 02.30.Ik, 05.45.Yv

\quad Mathematics Subject Classification: 39A10, 35Q58
%-----------------------------------------------------------------------------------------------------------------------
\section{Introduction}

The Hirota-Miwa equation \cite{Hirota-1981, Miwa-1982} is the three-dimensional discrete integrable system
\begin{equation}\label{H-M-1}
(a_1-a_2)\tau_{12}\tau_3+(a_2-a_3)\tau_{23}\tau_1+(a_3-a_1)\tau_{31}\tau_2=0,
\end{equation}
where \emph{lattice parameters} $a_k$ are constants, $k=1,2,3$, and for $\tau=\tau(n_1,n_2,n_3)$ each subscript $i$ denotes a forward shift in the corresponding discrete variable $n_i$.  The Hirota-Miwa equation is well known as a three dimensional discrete master equation, which includes many other lower dimensional systems as reductions.  In this paper, we discuss in detail the reductions to the discrete Korteweg de Vries (dKdV) equation and the discrete potential KdV (dpKdV) equation. The dpKdV equation is also known as the H1 equation in the Adler-Bobenko-Suris (ABS) classification \cite{ABS-2002}.  This paper is the first one of a planned series which will explore the equations in the ABS list as reductions of the Hirota-Miwa equation, and then to exploit this connection to find the Darboux and binary Darboux transformations of these systems.

The Hirota-Miwa equation \eqref{H-M-1} can also be written in terms of `nonlinear' variables rather than $\tau$-function in two distinct ways\cite{Nimmo-2006,Gilson-Nimmo-2007}, when using variables $u^{ij}:=\tau_{ij}\tau/\tau_i\tau_j$ and the linear system
\begin{equation}\label{H-M-LP-u-form}
a_i\phi_i-a_j\phi_j=(a_i-a_j)u^{ij}\phi,\quad 1\leq i<j\leq3,
\end{equation}
where for $\phi=\phi(n_1,n_2,n_3)$ each subscript $i$ denotes a forward shift in the corresponding discrete variable $n_i$.  This linear system \eqref{H-M-LP-u-form} is compatible if and only if
\begin{subequations}\label{dKP-u-form}
\begin{align}
&(a_1-a_2)u^{12}+(a_2-a_3)u^{23}+(a_3-a_1)u^{13}= 0, \label{dKP-u-form-a}\\
&(u^{ij})_{_{k}} u^{ik}=(u^{ik})_{_{j}} u^{ij}. \label{dKP-u-form-b}
\end{align}
\end{subequations}
Each of the variables $u^{ij}$ relates to the solution $u=(\log \tau)_{xx}$ of the KP equation in the continuum limits, so we call \eqref{dKP-u-form-a} the discrete KP (dKP) equation.  Note that when one uses the formula $u^{ij}=\tau_{ij}\tau/\tau_i\tau_j$, \eqref{dKP-u-form-a} gives \eqref{H-M-1} and \eqref{dKP-u-form-b} is satisfied identically.  A second way is to suppose $u^{ij}=(v_j-v_i+(a_i-a_j))/(a_i-a_j)$, where $v=(\log \tau)_{x}$, so that $v$ satisfies the potential KP equation since $v_x=u$.  This ansatz solves \eqref{dKP-u-form-a} exactly and \eqref{dKP-u-form-b} becomes the discrete potential KP (dpKP) equation (see also \cite{Nijhoff-1984-dpkp}).

The outline of this paper is as follows.  In Section \ref{Hirota-Miwa}, we recall important definitions and properties of the Hirota-Miwa equation.  In particular, we write the Hirota-Miwa equation in `nonlinear' form in two ways: the discrete KP equation and, by using a compatible continuous variable, the discrete potential KP equation. For both equations, we give two different associated linear systems and their corresponding auxiliary linear systems in differential-difference form.  So their Darboux and binary Darboux transformations are given in \emph{differential}, rather than the more usual difference form \cite{Nimmo-1997}.  The differential form uses the first continuous flow $x$ of the KP hierarchy which is compatible with the discrete flows in the Hirota-Miwa equation. These transformations are used to derive exact solutions in Wronskian and Grammian form, respectively.  In Section \ref{dKdV equation}, we discuss the  2-reduction of the Hirota-Miwa equation, and in particular of the dKP to the dKdV equation and of the dpKP to the dpKdV equation (H1 in the ABS classification \cite{ABS-2002}). Then, by taking appropriate reductions of the results in Section \ref{Hirota-Miwa}, we derive the Lax pairs for the dKdV and dpKdV equations, and their Darboux and binary Darboux transformations and exact solutions.

%------------------------------------------------------------------------------------------------------------------

\section{Hirota-Miwa Equation}\label{Hirota-Miwa}

%-----------------------------------------------------------------------------------------------------------------
\subsection{Wronskian and Grammian solutions of the Hirota-Miwa Equation}

The Hirota-Miwa equation \eqref{H-M-1} is the compatibility condition for the linear system \cite{Nimmo-1997, Nimmo-2006}
\begin{equation}\label{H-M-LP1}
a_i\phi_i-a_j\phi_j=(a_i-a_j)\frac{\tau_{ij}\tau}{\tau_i\tau_j}\phi,
\end{equation}
for $1\leq i<j\leq3$. It is invariant with respect to the reversal of all lattice directions $n_i\rightarrow -n_i$. On the other hand, the linear system \eqref{H-M-LP1} does not have such invariance and the reflections $n_i\rightarrow -n_i$ acting on \eqref{H-M-LP1} give a new linear system
\begin{equation}\label{H-M-LP2}
a_i\psi_{\b i}-a_j\psi_{\b j}=(a_i-a_j)\frac{\tau_{\b i\b j}\tau}{\tau_{\b i}\tau_{\b j}}\psi,
\end{equation}
for all $1\leq i<j\leq 3$. The subscript $\b i$ denotes a backward shift with respect to $n_i$, for example, $\psi_{\b 1}:=\psi(n_1-1,n_2,n_3)$. In \cite{Nimmo-1997, Nimmo-2006}, the difference form of the Darboux and binary Darboux transformations were derived for the Hirota-Miwa equation \eqref{H-M-1} and these were used to construct exact solutions in the form of Casoratian and discrete Grammian determinants. Here, we will express the Darboux and binary Darboux transformations in differential form instead, using the lowest order continuous flow $x$ of the KP hierarchy, and then the solutions obtained will be expressed as Wronskian and (continuous) Grammian determinants. The differential-difference linear equations for $\phi$ and $\psi$ are
\begin{equation}\label{H-M-LP1-x}
\phi_x=a_i\phi_i+\left(\left(\frac{\tau_x}{\tau}\right)_i-\frac{\tau_x}{\tau}-a_i\right)\phi,
\end{equation}
and
\begin{equation}\label{H-M-LP2-x}
\psi_x=-a_i\psi_{\b i}+\left(\left(\frac{\tau_x}{\tau}\right)_{\b i}-\frac{\tau_x}{\tau}+a_i\right)\psi,
\end{equation}
where the subscript $x$ denotes the derivative, with $\tau$ satisfying the semi-discrete KP equation
\begin{equation}\label{semi-discrete-KP}
(a_i-a_j)(\tau_{ij}\tau-\tau_i\tau_j)+\tau_{i,x}\tau_j-\tau_{i}\tau_{j,x}=0.
\end{equation}
It is straightforward to check that \eqref{H-M-LP1-x} and \eqref{H-M-LP2-x} are compatible with \eqref{H-M-LP1} and \eqref{H-M-LP2}.  Note that the reflection symmetry which relates \eqref{H-M-LP1} and \eqref{H-M-LP2} may be extended by adding $x\to-x$ to relate \eqref{H-M-LP1-x} and \eqref{H-M-LP2-x}.

The basic Darboux transformation for the Hirota-Miwa equation is stated in the following proposition.
\begin{proposition}\label{prop1-HM}
Let $\theta$ be a non-zero solution of the linear system \eqref{H-M-LP1} and \eqref{H-M-LP1-x} for some $\tau$. Then the transformations
\begin{equation}\label{H-M-DT1-1-step}
\phi\rightarrow\t \phi=a_i(\phi_i-\theta_i\theta^{-1}\phi)=\phi_x-\theta_x\theta^{-1}\phi, \quad \tau\rightarrow\t\tau=\theta\tau,
\end{equation}
leave \eqref{H-M-LP1} and \eqref{H-M-LP1-x} invariant, where $i=1,2,3$.
\end{proposition}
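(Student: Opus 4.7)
The plan is to verify, for $\t\phi$ and $\t\tau=\theta\tau$, each of the equations \eqref{H-M-LP1} and \eqref{H-M-LP1-x} directly by substitution. The first step is to reconcile the two proposed expressions for $\t\phi$. Since both $\phi$ and $\theta$ satisfy \eqref{H-M-LP1-x} with the same coefficient $(\tau_x/\tau)_i-\tau_x/\tau-a_i$, subtracting $\theta^{-1}\phi$ times the $\theta$-equation from the $\phi$-equation yields $\phi_x-\theta_x\theta^{-1}\phi=a_i(\phi_i-\theta_i\theta^{-1}\phi)$. The analogous manipulation applied to \eqref{H-M-LP1} then shows that $a_i(\phi_i-\theta_i\theta^{-1}\phi)$ is independent of the choice of $i$, so a single consistent $\t\phi$ is well defined.

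To check invariance of \eqref{H-M-LP1}, I would evaluate $a_i\t\phi_i-a_j\t\phi_j$ by using the $j$-form of $\t\phi$ for its $i$-shift and the $i$-form for its $j$-shift; the result collapses to $a_ia_j\theta_{ij}(\theta_j^{-1}\phi_j-\theta_i^{-1}\phi_i)$. On the other hand, the transformed potential is $\t\tau_{ij}\t\tau/(\t\tau_i\t\tau_j)=(\theta_{ij}\theta/\theta_i\theta_j)u^{ij}$, so the target $(a_i-a_j)\t u^{ij}\t\phi$ can be written in terms of $\t\phi=(a_j/\theta)(\theta\phi_j-\theta_j\phi)$. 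Equality then follows from a short identity obtained by multiplying \eqref{H-M-LP1} for $\theta$ by $\phi_j$, for $\phi$ by $\theta_j$, and subtracting.

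The invariance of \eqref{H-M-LP1-x} is most cleanly handled via the differential form $\t\phi=\phi_x-(\theta_x/\theta)\phi$. Introducing the gauged variable $\chi=\phi/\theta$, the identity of the two forms of $\t\phi$ is equivalent to the auxiliary relation $\theta\chi_x=a_i\theta_i(\chi_i-\chi)$, valid for each $i$. Computing $\t\phi_x=\theta_x\chi_x+\theta\chi_{xx}$ and $\t\phi_i=\theta_i\chi_{i,x}$, and using this auxiliary relation together with its $x$-derivative to eliminate $\chi_{xx}$, the desired equation for $\t\phi$ reduces to an identity whose residual coefficient is precisely $B:=\theta_{i,x}/\theta_i-\theta_x/\theta$, matching the contribution of $\t\tau=\theta\tau$ to $\t\zeta_i-\t\zeta$ with $\t\zeta=\t\tau_x/\t\tau$. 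I expect this last step to be the main obstacle, as it requires careful bookkeeping of the $x$-derivative and the shift simultaneously; the trick of passing to $\chi$ is what makes the cancellations transparent and removes the need to invoke \eqref{semi-discrete-KP} directly.
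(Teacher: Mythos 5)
Your proposal is correct and is essentially the paper's own argument: the authors simply state that ``the proof is a straightforward computation'' (and separately remark that the several expressions for $\t\phi$ are equivalent by virtue of \eqref{H-M-LP1} and \eqref{H-M-LP1-x}), and your substitution check — reconciling the forms of $\t\phi$, verifying \eqref{H-M-LP1} via the cross-multiplied identity $a_i(\theta_i\phi_j-\theta_j\phi_i)=(a_i-a_j)u^{ij}(\theta\phi_j-\theta_j\phi)$, and verifying \eqref{H-M-LP1-x} via the gauge $\chi=\phi/\theta$ — fills in exactly that computation; I verified that each step closes as you claim, with the final residual in the $x$-part reducing to the equation $\theta_x=a_i\theta_i+((\tau_x/\tau)_i-\tau_x/\tau-a_i)\theta$ satisfied by $\theta$.
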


The proof is a straightforward computation.  Note that there are four expressions for $\t\phi$ in the Darboux transformation \eqref{H-M-DT1-1-step}.  These are equivalent because of the linear equations \eqref{H-M-LP1} and \eqref{H-M-LP1-x}.

Next we write down the formula for iterated Darboux transformations, which give solutions in Wronskian and Casoratian determinant form.  The Wronskian determinant is the determinant of the $N\times N$ matrix with columns $\bm\Theta^{^{(j)}}\!=\!\bm\Theta^{^{(j)}}\!(x,y,t)$, for $j=0,1,\dots,N-1$, where $\bm\Theta^{^{(0)}} \!=\!({\theta_1}(x,y,t), {\theta_2}(x,y,t), \dots, {\theta_N}(x,y,t))^T$ and $\bm\Theta^{^{(j)}}\!=\! \partial^j\bm\Theta^{(0)}/\partial x^j$. It is written as
$$W({\theta_1}, {\theta_2}, \dots, {\theta_N})=|\bm\Theta^{(0)},\bm\Theta^{(1)},\dots,\bm\Theta^{(N-1)}|,$$
or in a more compact notation
$$W({\theta_1}, {\theta_2}, \dots, {\theta_N})=|\h{N-1}|.$$
The Casoratian determinant can be seen as a discrete analogue of the Wronskian determinant.  It is the determinant of the $N\times N$ matrix with columns $\bm\Theta^{^{(j)}}\!=\!\bm\Theta^{^{(j)}}\!(n_1,n_2,n_3)$, for $j=0,1,\dots,N-1$, where $\bm\Theta^{^{(0)}}\!=\!({\theta_1}(n_1,n_2,n_3), {\theta_2}(n_1,n_2,n_3), \dots, {\theta_N}(n_1,n_2,n_3))^T$, and $\bm\Theta^{(j)}$ is defined by the forward shifts, i.e.  $\bm\Theta^{(j)}=T_{n_1}^j(\bm\Theta^{(0)})=\bm\Theta(n_1+j,n_2,n_3)$. It is written as
\begin{align}\label{Casoratian-1}
C(\theta_1,\theta_2, \dots, \theta_N)=|\bm\Theta^{(0)},\bm\Theta^{(1)},\dots,\bm\Theta^{(N-1)}|=|0,1,\dots,N-1|=|\h{N-1}|.
\end{align}
We can also use $\bm\Theta^{(\b j)}$, which is defined by the backward shifts, to replace the $\bm\Theta^{(j)}$, where $\bm\Theta^{(\b j)}=T_{\b{n_1}}^j(\bm\Theta^{(0)})=\bm\Theta(n_1-j,n_2,n_3)$.

Below, we use the subscript $[i]$ to designate that the shifts of the Casoratian determinant are with respect to the variable $n_i$. For example, the Casoratian determinant in \eqref{Casoratian-1} could be denoted as $C_{_{[1]}}(\theta_1,\theta_2, \dots, \theta_N)$.

\begin{proposition}\label{prop1N-HM}
Let ${\theta_1}, {\theta_2}, \dots, {\theta_N}$ be non-zero, independent solutions of the linear system \eqref{H-M-LP1} and \eqref{H-M-LP1-x} for some $\tau$. Then $N$ applications of the above Darboux transformations give the transformations
\begin{equation}\label{N-DT1}
\phi\rightarrow\phi[N]=\frac{a_i^N C_{_{[i]}}(\theta_1,\theta_2, \dots, \theta_N,\phi)}{C_{_{[i]}}(\theta_1,\theta_2, \dots, \theta_N)}=\frac{W(\theta_1,\theta_2, \dots, \theta_N,\phi)}{W(\theta_1,\theta_2, \dots, \theta_N)},
\end{equation}
and
\begin{equation}\label{N-DT1-2}
\tau\rightarrow\tau[N]=a_i^{\frac{N(N-1)}{2}} C_{_{[i]}}(\theta_1,\theta_2, \dots, \theta_N)\tau= W(\theta_1,\theta_2,\dots, \theta_N)\tau,
\end{equation}
which leave \eqref{H-M-LP1} and \eqref{H-M-LP1-x} invariant. Here $C_{[i]}$ denotes the Casoratian determinant in  forward shifts with respect to the discrete variable $n_i$, for $i=1,2,3$.
\end{proposition}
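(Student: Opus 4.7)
The plan is to proceed by induction on $N$, iterating the elementary Darboux transformation of Proposition \ref{prop1-HM} and invoking a Jacobi (Sylvester) determinant identity to collapse the resulting ratio of lower-order determinants into a single ratio of higher-order ones. The base case $N=1$ is immediate: expanding the $2\times 2$ determinants gives $W(\theta_1,\phi)=\theta_1\phi_x-\theta_{1,x}\phi$ and $C_{_{[i]}}(\theta_1,\phi)=\theta_1\phi_i-\theta_{1,i}\phi$, so \eqref{N-DT1} reduces with $N=1$ to the two equivalent expressions for $\t\phi$ in \eqref{H-M-DT1-1-step}, while $\tau[1]=\theta_1\tau$ matches both $W(\theta_1)\tau$ and $a_i^{0}C_{_{[i]}}(\theta_1)\tau$.

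For the inductive step, assume the formulas hold after $N-1$ iterations with seeds $\theta_1,\ldots,\theta_{N-1}$. Applying the hypothesis to $\theta_N$ in place of $\phi$ produces the transformed seed
\begin{equation*}
\theta_N[N-1]=\frac{W(\theta_1,\ldots,\theta_N)}{W(\theta_1,\ldots,\theta_{N-1})}=a_i^{N-1}\,\frac{C_{_{[i]}}(\theta_1,\ldots,\theta_N)}{C_{_{[i]}}(\theta_1,\ldots,\theta_{N-1})},
\end{equation*}
which is nonzero by independence of the seeds and hence a legitimate Darboux seed. One more application of Proposition \ref{prop1-HM} yields $\phi[N]=\phi[N-1]_x-(\theta_N[N-1])_x(\theta_N[N-1])^{-1}\phi[N-1]$ and $\tau[N]=\theta_N[N-1]\tau[N-1]$. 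The telescoping product $\tau[N]=\prod_{k=1}^{N}\theta_k[k-1]\cdot\tau$ collapses to $W(\theta_1,\ldots,\theta_N)\tau$ in Wronskian form, and the Casoratian expression accumulates the prefactor $\prod_{k=1}^{N}a_i^{k-1}=a_i^{N(N-1)/2}$ appearing in \eqref{N-DT1-2}.

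The core step is the verification of the $\phi[N]$ formula. Writing $W_k:=W(\theta_1,\ldots,\theta_k)$ and $W_{k,\phi}:=W(\theta_1,\ldots,\theta_k,\phi)$, substituting the induction hypothesis and clearing fractions reduces the task to the identity
\begin{equation*}
W_{N,\phi}\,W_{N-1}=W_N\,\partial_x W_{N-1,\phi}-W_{N-1,\phi}\,\partial_x W_N,
\end{equation*}
which is the Jacobi (Sylvester) determinant identity applied to the $(N+1)\times(N+1)$ matrix with rows $\theta_1,\ldots,\theta_N,\phi$ and columns $\partial_x^{0},\ldots,\partial_x^{N}$, with the two deleted rows being the last two and the two deleted columns being the last two; here one uses that $\partial_x W_k$ coincides with the determinant obtained from $W_k$ by replacing the column $\partial_x^{k-1}$ with $\partial_x^{k}$, all other column derivatives vanishing by repetition. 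Exactly the same argument in the Casoratian setting, with $\partial_x$ replaced by the shift $T_{n_i}$, yields the analogous identity $C_{N,\phi}\,T_{n_i}C_{N-1}=C_N\,T_{n_i}C_{N-1,\phi}-C_{N-1,\phi}\,T_{n_i}C_N$, and the factor $a_i^{N}$ in \eqref{N-DT1} arises by compounding one factor of $a_i$ per Darboux step.

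The main obstacle is the determinantal bookkeeping: one must match the three $(N-1)\times(N-1)$ minors produced by Sylvester's relation against the specific combinations of $W_{N-1}$, $W_{N-1,\phi}$, $W_N$, $W_{N,\phi}$ (respectively their $T_{n_i}$-shifted counterparts) that emerge from the iteration, and verify that the discrete form of the identity holds independently for each $i\in\{1,2,3\}$ so that \eqref{N-DT1} is simultaneously consistent with the shift in every discrete direction. Once this algebraic identity is in hand, invariance of \eqref{H-M-LP1} and \eqref{H-M-LP1-x} under $(\phi,\tau)\mapsto(\phi[N],\tau[N])$ follows at once from the $N$-fold application of Proposition \ref{prop1-HM}.
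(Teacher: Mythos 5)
Your proof is correct and follows the standard Crum-type induction via the Desnanot--Jacobi determinant identity; this is precisely the ``straightforward computation'' that the paper omits, deferring instead to \cite{Nimmo-1997,Nimmo-2006}, where the same iteration-plus-Jacobi argument is carried out. The one small imprecision is in the Casoratian case: since $T_{n_i}$ shifts \emph{every} column of $C_{N-1}$ (not just the last), the relevant Jacobi identity there comes from deleting the first and last columns rather than the last two, but the identity you actually write down, $C_{N,\phi}\,T_{n_i}C_{N-1}=C_N\,T_{n_i}C_{N-1,\phi}-C_{N-1,\phi}\,T_{n_i}C_N$, is the correct one and the argument goes through.
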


For example, by using the results \eqref{N-DT1-2}, the $N$-soliton solutions of the Hirota-Miwa equation can be expressed in both Casoratian and Wronskian form in terms of the eigenfunctions,
\begin{equation}\label{theta-k}
\theta_k(n_1,n_2,n_3)=e^{p_{\!_k}\!x}\prod^{3}_{i=1}(1+\frac{p_{_k}}{a_i})^{n_i} +e^{q_{\!_k}\!x}\prod^{3}_{i=1}(1+\frac{q_{_k}}{a_i})^{n_i}.
\end{equation}
Here $\theta_k$, $k=1,2,\dots, N$, is obtained from \eqref{H-M-LP1} and \eqref{H-M-LP1-x}, by choosing the trivial solution $\tau=1$.

Now we can apply the reflections $n_i \rightarrow -n_i$ and $x\rightarrow -x$ to the above results to deduce corresponding result for the second linear system \eqref{H-M-LP2} and \eqref{H-M-LP2-x}.
\begin{proposition}\label{prop2-HM}
Let $\rho$ be a non-zero solution of the linear system \eqref{H-M-LP2} and \eqref{H-M-LP2-x} for some $\tau$. Then the transformations
\begin{equation}\label{H-M-DT2-1-step}
\psi\rightarrow\t \psi=a_i(\psi_{\b i}-\rho_{\b i}\rho^{-1}\psi)=\psi_x-\rho_x\rho^{-1}\psi, \quad \tau\rightarrow\t\tau=\rho\tau,
\end{equation}
leave \eqref{H-M-LP2} and \eqref{H-M-LP2-x} invariant, for all $i=1,2,3$.
\end{proposition}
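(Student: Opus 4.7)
The plan is to deduce this proposition from Proposition \ref{prop1-HM} by exploiting the reflection symmetry that has already been highlighted in the text: simultaneous reversal of all lattice directions $n_i \to -n_i$ together with $x \to -x$ maps the first linear system \eqref{H-M-LP1}, \eqref{H-M-LP1-x} to the second linear system \eqref{H-M-LP2}, \eqref{H-M-LP2-x}. Concretely, if I set $\psi(n_1,n_2,n_3;x) := \phi(-n_1,-n_2,-n_3;-x)$ and $\b\tau(n_1,n_2,n_3;x) := \tau(-n_1,-n_2,-n_3;-x)$, then forward shifts become backward shifts under the relabeling, the $x$-derivative picks up a sign, and the semi-discrete KP equation \eqref{semi-discrete-KP} (which is invariant under this combined reflection up to an overall sign) guarantees that $\b\tau$ satisfies the right equation.

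First I would verify the action of the reflection on each piece of the linear system and check directly that $\psi$ so defined satisfies \eqref{H-M-LP2} and \eqref{H-M-LP2-x} whenever $\phi$ satisfies \eqref{H-M-LP1} and \eqref{H-M-LP1-x}. Next, I would apply the same reflection to the Darboux transformation of Proposition \ref{prop1-HM}: the formula $\t\phi = a_i(\phi_i - \theta_i\theta^{-1}\phi) = \phi_x - \theta_x\theta^{-1}\phi$ transforms (after relabeling the reflected seed $\theta$ as $\rho$) into the claimed $\t\psi = a_i(\psi_{\b i} - \rho_{\b i}\rho^{-1}\psi) = \psi_x - \rho_x\rho^{-1}\psi$, and $\t\tau = \theta\tau$ becomes $\t\tau = \rho\tau$, with any overall sign being immaterial since Darboux transformations are defined up to a non-zero scalar. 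Invariance of \eqref{H-M-LP2} and \eqref{H-M-LP2-x} under $\psi \mapsto \t\psi$, $\tau \mapsto \rho\tau$ then follows from the corresponding invariance in Proposition \ref{prop1-HM}.

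Alternatively, one can give a direct verification that bypasses the reflection argument: substitute $\t\psi = a_i(\psi_{\b i} - \rho_{\b i}\rho^{-1}\psi)$ and $\t\tau = \rho\tau$ into the shifted version of \eqref{H-M-LP2}, use that $\rho$ itself solves \eqref{H-M-LP2} and \eqref{H-M-LP2-x}, and reduce to an identity. The equivalence of the two expressions for $\t\psi$ in \eqref{H-M-DT2-1-step} follows from subtracting $\rho^{-1}\psi$ times \eqref{H-M-LP2-x} applied to $\rho$ from \eqref{H-M-LP2-x} applied to $\psi$, which cancels the $\tau_x/\tau$ terms.

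The main obstacle will be bookkeeping: tracking signs correctly through the $x \to -x$ reflection (especially in \eqref{semi-discrete-KP} and in the $\pm a_i$ that distinguish \eqref{H-M-LP1-x} from \eqref{H-M-LP2-x}) and making sure the relabeling of backward/forward shifts is consistent. Once this is set up carefully, both the symmetry-based proof and the direct computation are short and mechanical.
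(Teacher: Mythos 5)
Your proposal is correct and follows essentially the same route as the paper: the authors explicitly derive Proposition \ref{prop2-HM} by applying the reflections $n_i\to-n_i$, $x\to-x$ to Proposition \ref{prop1-HM}, and their remark that ``the linearity \ldots allows us to omit a minus sign'' is exactly the sign issue you flag (under the reflection the two displayed expressions for $\t\psi$ come out as negatives of one another, so in your direct-verification paragraph ``equivalence'' should read ``equality up to sign''). Nothing further is needed.
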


In the statement of this proposition the linearity of \eqref{H-M-LP2} and  \eqref{H-M-LP2-x} allows us to omit a minus sign.

\begin{proposition}\label{prop2N-HM}
Let ${\rho_1}, {\rho_2}, \dots, {\rho_N}$ be non-zero, independent solutions of the linear system \eqref{H-M-LP2} and \eqref{H-M-LP2-x} for some $\tau$.  Then $N$ applications of the above Darboux transformations give the transformations
\begin{equation}
\psi\rightarrow\psi[N]=\frac{a_i^N C_{_{[\b i]}}(\rho_1,\rho_2, \dots, \rho_N,\psi)}{C_{_{[\b i]}}(\rho_1,\rho_2, \dots, \rho_N)}=\frac{W(\rho_1,\rho_2, \dots, \rho_N,\psi)}{W(\rho_1,\rho_2, \dots, \rho_N)},\label{N-DT2N-1}
\end{equation}
and
\begin{equation}
\tau\rightarrow\tau[N]=a_i^{\frac{N(N-1)}{2}} C_{_{[\b i]}}(\rho_1,\rho_2, \dots, \rho_N)\tau= W(\rho_1,\rho_2, \dots, \rho_N)\tau,\label{N-DT2N-2}
\end{equation}
which leave \eqref{H-M-LP2} and \eqref{H-M-LP2-x} invariant. Here $C_{[\b i]}$ denotes the Casoratian determinant in backward shifts with respect to the discrete variable $n_i$, for $i=1,2,3$.
\end{proposition}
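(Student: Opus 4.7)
The plan is to derive this proposition from Proposition \ref{prop1N-HM} by exploiting the reflection symmetry $n_i\to -n_i$, $x\to -x$, exactly as Proposition \ref{prop2-HM} was obtained from Proposition \ref{prop1-HM}. Given $\psi$ and seeds $\rho_1,\dots,\rho_N$ satisfying \eqref{H-M-LP2} and \eqref{H-M-LP2-x} for $\tau$, I would introduce reflected functions $\h\psi(n_1,n_2,n_3,x):=\psi(-n_1,-n_2,-n_3,-x)$, and likewise $\h\tau$ and $\h{\rho_k}$. Under this involution $T_{n_i}\leftrightarrow T_{\b{n_i}}$ and $\partial_x\mapsto -\partial_x$, and a direct substitution confirms that $\h\psi$ and each $\h{\rho_k}$ solve \eqref{H-M-LP1} and \eqref{H-M-LP1-x} for $\h\tau$.

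I would then apply Proposition \ref{prop1N-HM} to $\h\psi$ with seeds $\h{\rho_k}$, obtaining $\h\psi[N]$ and $\h\tau[N]$ as the ratio and product of $C_{[i]}$ (equivalently, $W$) evaluated on the hatted functions. Pulling the reflection back, the forward-shift Casoratian $C_{[i]}(\h{\rho_1},\dots,\h{\rho_N},\cdot)$ becomes the backward-shift Casoratian $C_{[\b i]}(\rho_1,\dots,\rho_N,\cdot)$, producing exactly the Casoratian expressions in \eqref{N-DT2N-1} and \eqref{N-DT2N-2}. For the Wronskian form, the transformation $x\to -x$ multiplies the $j$-th column of the Wronskian by $(-1)^j$, contributing an overall sign $(-1)^{N(N-1)/2}$ or $(-1)^{N(N+1)/2}$; in the ratio defining $\psi[N]$ these signs cancel, and any residual overall sign on $\psi[N]$ is irrelevant because \eqref{H-M-LP2} and \eqref{H-M-LP2-x} are linear in $\psi$, just as in the remark after Proposition \ref{prop2-HM}.

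The main obstacle is the formula \eqref{N-DT2N-2} for $\tau[N]$, where the Wronskian stands alone with no compensating denominator, so the leftover sign $(-1)^{N(N-1)/2}$ must be addressed. I would handle this by observing that \eqref{H-M-LP2} and \eqref{H-M-LP2-x} depend on $\tau$ only through scale-invariant combinations such as $\tau_{\b i\b j}\tau/(\tau_{\b i}\tau_{\b j})$ and $(\tau_x/\tau)_{\b i}-\tau_x/\tau$, so any constant rescaling of $\tau$ preserves the system and the sign can be absorbed into the choice of $\tau[N]$. The equivalence of the Casoratian and Wronskian forms within \eqref{N-DT2N-1}-\eqref{N-DT2N-2} is inherited from Proposition \ref{prop1N-HM}, which itself rests on compatibility between the discrete shift equations and the differential equations for $\rho_k$. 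As a cross-check, one could alternatively induct on $N$ starting from Proposition \ref{prop2-HM}, with the inductive step handled by the Jacobi identity for bordered determinants; this route is more direct but requires considerably more sign bookkeeping than the reflection argument.
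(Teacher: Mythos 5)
Your proposal is correct and follows essentially the same route the paper indicates: Proposition \ref{prop2N-HM} is obtained from Proposition \ref{prop1N-HM} by the reflections $n_i\to-n_i$, $x\to-x$, with residual signs absorbed by the linearity of \eqref{H-M-LP2}, \eqref{H-M-LP2-x} (for $\psi[N]$) and the scale invariance of the linear systems in $\tau$ (for $\tau[N]$), exactly as the paper's remark after Proposition \ref{prop2-HM} already licenses. The paper omits the details, so your sign bookkeeping is a welcome supplement rather than a departure.
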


The $N$-soliton solutions of the Hirota-Miwa equation are given by \eqref{N-DT2N-2}. From the linear system \eqref{H-M-LP2} and \eqref{H-M-LP2-x}, if we choose the seed solution as $\tau=1$, then the Casoratian and Wronskian determinants are defined by eigenfunctions,
\begin{equation}\label{rho-k}
\rho_k(n_1,n_2,n_3)=e^{-p_kx}\prod^{3}_{i=1}(1+\frac{p_{_k}}{a_i})^{-n_i}+e^{-q_kx}\prod^{3}_{i=1} (1+\frac{q_{_k}}{a_i})^{-n_i},
\end{equation}
for $k=1,2,\dots,N$.

To construct a binary Darboux transformation, we introduce the potential $\omega=\omega(\phi, \psi)$, defined by the relations
\begin{align}
\Delta_i\omega(\phi, \psi)&=\phi \psi_i, \label{omega-1}\\
\omega_x(\phi, \psi)&=\phi\psi,\label{omega-2}
\end{align}
where $\Delta_i=a_i(T_{n_i}-1)$ and $T_{n_i}$ is the forward shift operator in variable $n_i$, for $i=1, 2, 3$.  If $\phi$ and $\psi$ satisfy the linear systems \eqref{H-M-LP1}, \eqref{H-M-LP1-x} and \eqref{H-M-LP2}, \eqref{H-M-LP2-x}, respectively, then \eqref{omega-1} and \eqref{omega-2} are compatible. So the potential $\omega$ is well-defined.

The following proposition gives the binary Darboux transformation of the Hirota-Miwa equation.

\begin{proposition}\label{prop12-HM}
Suppose $\theta$ and $\phi$ are non-zero solutions of the linear system \eqref{H-M-LP1} and \eqref{H-M-LP1-x}, $\rho$ and $\psi$ are non-zero solutions of the linear system \eqref{H-M-LP2} and \eqref{H-M-LP2-x}, then the transformations
\begin{align}
\phi\rightarrow\h\phi&=\phi-\theta w(\theta, \rho)^{-1}w(\phi,\rho),\label{H-M-bDT-1}\\
\psi\rightarrow\h\psi&=\psi-\rho w(\theta, \rho)^{-1}w(\theta,\psi),\label{H-M-bDT-2}
\end{align}
leave \eqref{H-M-LP1}, \eqref{H-M-LP1-x} and \eqref{H-M-LP2}, \eqref{H-M-LP2-x} invariant, respectively, with $\tau$ changing to
\begin{equation}\label{H-M-bDT-1-tau}
\h\tau=w(\theta, \rho)\tau.
\end{equation}
\end{proposition}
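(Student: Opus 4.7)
The plan is to realise the binary Darboux transformation as a composition of two elementary Darboux steps: first apply Proposition \ref{prop1-HM} with seed $\theta$, obtaining intermediate data $\tilde\tau=\theta\tau$ and $\tilde\phi=\phi_x-\theta_x\theta^{-1}\phi$; then apply Proposition \ref{prop2-HM} with a seed $\tilde\rho$ obtained from $\rho$ by the first step. Propositions \ref{prop1-HM} and \ref{prop2-HM} already handle the $\phi$- and $\psi$-linear systems separately, so what remains is to track how $\rho$ and $\psi$ move under the first step and how $\phi$ moves under the second.

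For the first step, the companion rule I would establish is that if $\psi$ solves \eqref{H-M-LP2}--\eqref{H-M-LP2-x} relative to $\tau$, then $\tilde\psi:=w(\theta,\psi)/\theta$ solves the same system relative to $\tilde\tau=\theta\tau$. To verify this, I would rewrite the backward shift and the $x$-derivative of $\tilde\psi$ using $\Delta_i w(\theta,\psi)=\theta\psi_i$ and $w(\theta,\psi)_x=\theta\psi$, and match against the backward-shifted form of \eqref{H-M-LP1}--\eqref{H-M-LP1-x} applied to $\theta$. The identical calculation with $\rho$ in place of $\psi$ produces the seed $\tilde\rho:=w(\theta,\rho)/\theta$ for the second step. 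By the mirror of the same argument (swapping the roles of the two linear systems), Proposition \ref{prop2-HM} in the second step also induces a transformation on the $\phi$-variable, of the form $\tilde\phi\mapsto\hat\phi=w(\tilde\phi,\tilde\rho)/\tilde\rho$.

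The second step is then straightforward. Proposition \ref{prop2-HM} with seed $\tilde\rho$ gives at once $\hat\tau=\tilde\rho\tilde\tau=(w(\theta,\rho)/\theta)(\theta\tau)=w(\theta,\rho)\tau$, matching \eqref{H-M-bDT-1-tau}. Substituting $\tilde\psi$ and $\tilde\rho$ into $\hat\psi=\tilde\psi_x-\tilde\rho_x\tilde\rho^{-1}\tilde\psi$ and using $w(\theta,\psi)_x=\theta\psi$ and $w(\theta,\rho)_x=\theta\rho$ makes the $\theta_x$ contributions cancel, leaving $\hat\psi=\psi-\rho\,w(\theta,\rho)^{-1}w(\theta,\psi)$, which is \eqref{H-M-bDT-2}. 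For $\hat\phi$, a direct check of the defining relation $w(\tilde\phi,\tilde\rho)_x=\tilde\phi\tilde\rho$ identifies $w(\tilde\phi,\tilde\rho)=\phi\,w(\theta,\rho)/\theta-w(\phi,\rho)$, and dividing by $\tilde\rho=w(\theta,\rho)/\theta$ reproduces \eqref{H-M-bDT-1}.

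The main obstacle is the lemma in the first step identifying $w(\theta,\psi)/\theta$ as the image of $\psi$ under the $\theta$-Darboux transformation. This is where the defining relations of the potential $w$, the two distinct linear systems \eqref{H-M-LP1}--\eqref{H-M-LP1-x} and \eqref{H-M-LP2}--\eqref{H-M-LP2-x}, and the compatibility encoded in the semi-discrete KP equation \eqref{semi-discrete-KP} all have to be used together, and it is essentially this identity that carries the content of the binary construction; once it is in hand, the remaining manipulations are routine algebraic cancellations.
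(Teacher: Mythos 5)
Your argument is correct, and it is the standard compositional construction of the binary Darboux transformation: an ordinary Darboux step (Proposition \ref{prop1-HM}) with seed $\theta$, followed by an adjoint step (Proposition \ref{prop2-HM}) with the induced seed $\widetilde\rho=\omega(\theta,\rho)/\theta$, the potential $\omega$ supplying the intermediate eigenfunctions. The paper gives no proof of this proposition at all --- it declares the proofs ``straightforward computation'' and defers to the cited references --- so there is nothing internal to compare against; but the route you take is precisely the one used in those references, and the pieces all check out: $\widetilde\psi=\omega(\theta,\psi)/\theta$ does satisfy \eqref{H-M-LP2} and \eqref{H-M-LP2-x} for $\widetilde\tau=\theta\tau$ (the verification needs only $\Delta_i\omega(\theta,\psi)=\theta\psi_i$, $\omega(\theta,\psi)_x=\theta\psi$ and the backward-shifted forms of \eqref{H-M-LP1}, \eqref{H-M-LP1-x} applied to $\theta$), the composition gives $\widehat\tau=\widetilde\rho\,\widetilde\tau=\omega(\theta,\rho)\tau$, and the cancellations you describe reproduce \eqref{H-M-bDT-1} and \eqref{H-M-bDT-2} exactly. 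What your decomposition buys over brute-force substitution is that the invariance of the linear systems is inherited from the two one-step propositions, so only the bookkeeping of the seeds needs to be done afresh.

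One point to tighten. When you identify $\omega(\widetilde\phi,\widetilde\rho)=\phi\,\omega(\theta,\rho)/\theta-\omega(\phi,\rho)$ you appeal only to the differential relation \eqref{omega-2}. Since $\omega$ is defined by the difference relations \eqref{omega-1} together with \eqref{omega-2}, matching the $x$-derivative alone leaves an undetermined $x$-independent function of $(n_1,n_2,n_3)$; you must also verify $\Delta_i\bigl(\phi\,\omega(\theta,\rho)/\theta-\omega(\phi,\rho)\bigr)=\widetilde\phi\,\widetilde\rho_i$, which does hold upon using $\Delta_i\omega(\theta,\rho)=\theta\rho_i$, $\Delta_i\omega(\phi,\rho)=\phi\rho_i$ and the difference form $\widetilde\phi=a_i(\phi_i-\theta_i\theta^{-1}\phi)$ of the first Darboux step. (For the lemma on $\widetilde\psi$ you already state that both the shift and the derivative relations are to be checked, so no issue arises there.) With that verification added, the proof is complete.
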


The $N$-fold iteration of the binary Darboux transformation is as follows.

\begin{proposition}\label{prop12N-HM}
Let $\theta_1,\dots,\theta_N$ and $\rho_1,\dots,\rho_N$ satisfy the linear systems \eqref{H-M-LP1}, \eqref{H-M-LP1-x} and \eqref{H-M-LP2}, \eqref{H-M-LP2-x}, respectively. Define $N$-vectors $\bm\Theta=(\theta_1,\dots,\theta_N)^T$ and $\bm P=(\rho_1,\dots,\rho_N)^T$. Then $N$ applications of the binary Darboux transformation give the transformations
\begin{align}
\phi\rightarrow\phi[N]=\begin{vmatrix}
\bm\Omega(\bm\Theta, \bm P) & \bm\Theta\\
\bm\Omega(\phi, \bm P) & \phi \\
\end{vmatrix}
|\bm\Omega(\bm\Theta, \bm P) |^{-1},
\quad
\psi\rightarrow \psi[N]=\begin{vmatrix}
\bm\Omega(\bm\Theta, \bm P) &\bm P \\
\bm\Omega( \bm\Theta,\psi) & \psi \\
\end{vmatrix}
|\bm\Omega(\bm\Theta, \bm P) |^{-1},\label{H-M-N-bDT-1}
\end{align}
which leave \eqref{H-M-LP1}, \eqref{H-M-LP1-x} and \eqref{H-M-LP2}, \eqref{H-M-LP2-x} invariant, respectively, with $\tau$ changing to
\begin{equation}
\tau[N]=|\bm\Omega(\bm\Theta, \bm P)|\tau.\label{H-M-N-bDT-2}
\end{equation}
Here $\bm\Omega(\bm\Theta, \bm P)=(\omega(\theta_i, \rho_j))_{i,j=1,\dots,N}$ is an $N\times N$ matrix, $\bm\Omega(\phi, \bm P)=(\omega(\phi,\rho_j))_{j=1,\dots,N}$ and $\bm\Omega(\psi, \bm \Theta)=(\omega(\theta_i,\psi))_{i=1,\dots,N}$ are $N$-row vectors.
\end{proposition}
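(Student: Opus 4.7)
I would argue by induction on $N$. For the base case $N=1$, the $2\times 2$ bordered determinants in \eqref{H-M-N-bDT-1} expand as
\begin{equation*}
\phi[1]=\phi-\theta_1\,\omega(\theta_1,\rho_1)^{-1}\omega(\phi,\rho_1),\qquad \psi[1]=\psi-\rho_1\,\omega(\theta_1,\rho_1)^{-1}\omega(\theta_1,\psi),
\end{equation*}
and $\tau[1]=\omega(\theta_1,\rho_1)\tau$, matching \eqref{H-M-bDT-1}, \eqref{H-M-bDT-2} and \eqref{H-M-bDT-1-tau} in Proposition \ref{prop12-HM}. Hence the claim holds for $N=1$, and the invariance of the two linear systems and their differential companions is exactly what Proposition \ref{prop12-HM} asserts.

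For the inductive step I assume the formulas hold after $N-1$ iterations, giving $\phi[N-1]$, $\psi[N-1]$ and $\tau[N-1]$ together with transformed seed functions $\theta_N[N-1]$ and $\rho_N[N-1]$ (obtained by substituting $\phi=\theta_N$ in \eqref{H-M-N-bDT-1} and $\psi=\rho_N$ in the companion formula). Applying Proposition \ref{prop12-HM} a single further time, with these transformed seeds, is legitimate because the one-step transformation preserves the full linear system; it gives
\begin{equation*}
\phi[N]=\phi[N-1]-\theta_N[N-1]\,\omega(\theta_N[N-1],\rho_N[N-1])^{-1}\,\omega(\phi[N-1],\rho_N[N-1]),
\end{equation*}
and the analogous formulas for $\psi[N]$ and $\tau[N]=\omega(\theta_N[N-1],\rho_N[N-1])\,\tau[N-1]$. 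The task is then to identify this iterated object with the bordered-determinant expression in the statement.

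The key auxiliary identity is a Schur-complement formula for the potential: a direct check from the defining relations \eqref{omega-1}-\eqref{omega-2}, using the transformation rules \eqref{H-M-bDT-1}-\eqref{H-M-bDT-2}, shows that after one binary step built from $(\theta_1,\rho_1)$ the potential transforms as
\begin{equation*}
\omega(\hat\phi,\hat\psi)=\omega(\phi,\psi)-\omega(\phi,\rho_1)\,\omega(\theta_1,\rho_1)^{-1}\,\omega(\theta_1,\psi).
\end{equation*}
Iterating this formula $N-1$ times rewrites each of $\omega(\theta_N[N-1],\rho_N[N-1])$ and $\omega(\phi[N-1],\rho_N[N-1])$ as a Schur complement of the leading $(N-1)\times(N-1)$ block of the Grammian matrix $\bm\Omega(\bm\Theta,\bm P)$ inside its appropriate bordered extension.

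With these Schur-complement representations in hand, the final move is Jacobi's determinant identity (Sylvester's identity), which recombines the inductive $(N-1)$-fold bordered determinant with the single-step correction to produce exactly the $N$-fold bordered determinants in \eqref{H-M-N-bDT-1}. The $\tau$-formula \eqref{H-M-N-bDT-2} emerges similarly from the telescoping product of the factors $\omega(\theta_k[k-1],\rho_k[k-1])$, recognized via Jacobi's identity as the ratio of successive principal minors of $\bm\Omega(\bm\Theta,\bm P)$. The main obstacle I anticipate is the careful bookkeeping of row/column positions and signs required to match each Schur complement with the correct cofactor of the bordered matrix; once this is controlled, the invariance of the linear systems under $\phi\mapsto\phi[N]$, $\psi\mapsto\psi[N]$, $\tau\mapsto\tau[N]$ is automatic, as each of the $N$ single steps preserves them by Proposition \ref{prop12-HM}.
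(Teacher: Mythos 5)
Your plan is correct. Note that the paper itself supplies no proof of this proposition: it states that the proofs are ``straightforward computation'' and defers to the references \cite{Nimmo-1997,Nimmo-2006}, where the iterated binary Darboux transformation is established by essentially the argument you outline. Your base case correctly reproduces Proposition \ref{prop12-HM} from the $2\times2$ bordered determinants, and the inductive step via the Schur-complement rule
$\omega(\hat\phi,\hat\psi)=\omega(\phi,\psi)-\omega(\phi,\rho_1)\,\omega(\theta_1,\rho_1)^{-1}\,\omega(\theta_1,\psi)$
combined with Jacobi's identity is the standard route to the Grammian form; the invariance claim does follow for free from $N$ applications of the one-step result, provided you also record (as you do) that the transformed seeds $\theta_k[k-1]$, $\rho_k[k-1]$ remain admissible solutions of the transformed systems. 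The only point worth making explicit is that $\omega$ is defined by \eqref{omega-1}--\eqref{omega-2} only up to an additive constant, so your Schur-complement formula is really the statement that its right-hand side is \emph{a valid choice} of the new potential $\omega(\hat\phi,\hat\psi)$ (i.e.\ it satisfies the defining relations for the hatted functions); fixing this normalisation at each step is what makes the telescoping product of factors $\omega(\theta_k[k-1],\rho_k[k-1])$ collapse to the single determinant $|\bm\Omega(\bm\Theta,\bm P)|$ in \eqref{H-M-N-bDT-2}.
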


The proofs of those above propositions are straightforward computation, so we do not give the details. The reader is also referred to the papers \cite{Nimmo-1997,Nimmo-2006}.

%-----------------------------------------------------------------------------------------------------------------
\subsection{The discrete potential KP equation}\label{dpkp equation}

By introducing a potential $v(n_1,n_2,n_3;x):=\tau_x/\tau$, the semi-discrete equation \eqref{semi-discrete-KP} gives the relation
\begin{equation}\label{trans.-dpKP-dKP}
u^{ij}=\frac{\tau_{ij}\tau}{\tau_{i}\tau_j}=\frac{v_j-v_i+(a_i-a_j)}{a_i-a_j}, \quad 1\leq i<j\leq 3,
\end{equation}
for $v$ each subscript $i$ denotes a forward shift in the corresponding discrete variable $n_i$. So \eqref{dKP-u-form-a} is satisfied identically, and \eqref{dKP-u-form-b} becomes
\begin{subequations}\label{dpkp}
\begin{equation}\label{dpkp-1}
\frac{v_2-v_1+a_1-a_2}{(v_2-v_1+a_1-a_2)_{_{3}}}=\frac{v_3-v_1+a_1-a_3}{(v_3-v_1+a_1-a_3)_{_{2}}}=\frac{v_3-v_2+a_2-a_3}{(v_3-v_2+a_2-a_3)_{_{1}}}.
\end{equation}
The equation \eqref{dpkp-1} can be written in two other equivalent forms as either
\begin{equation}\label{dpkp-2}
\left(v_2-v_1+a_1-a_2\right)\left(v_{23}-v_{12}+a_1-a_3\right)=\left(v_3-v_1+a_1-a_3\right) \left(v_{23}-v_{13}+a_1-a_2\right),
\end{equation}
or
\begin{equation}\label{dpkp-3}
(a_3\!+\!v_{12})\!\left(a_1\!-\!a_2\!+\!v_2\!-\!v_1\right)\!+\!(a_2\!+\!v_{13})\!\left(a_3\!-\!a_1\!+\! v_1\!-\!v_3\right)\!+\!(a_1+v_{23})\!\left(a_2\!-\!a_3\!+\!v_3\!-\!v_2\right)=0.
\end{equation}
\end{subequations}
Equation \eqref{dpkp} is called the discrete potential KP (dpKP) equation. It was first found in \cite{Nijhoff-1984-dpkp} as the nonlinear superposition of solutions to the potential KP equation related by the B\"{a}cklund transformations, cf.\cite{Nijhoff-Notes-1} as well.  The trivial solution of the dpKP equation \eqref{dpkp} could be $v=c$, where $c$ is an arbitrary constant.

The linear systems associated with the dpKP equation, obtained by using relations \eqref{H-M-LP1} and  \eqref{H-M-LP2}, together with \eqref{trans.-dpKP-dKP}, are
\begin{equation}\label{dpKP-LP-1}
a_i\phi_i-a_j\phi_j=(v_j-v_i+a_i-a_j)\phi,
\end{equation}
and
\begin{equation}\label{dpKP-LP-2}
a_i\psi_{\b i}-a_j\psi_{\b j}=(v_{\b j}-v_{\b i}+a_i-a_j)\psi,
\end{equation}
where $1\leq i<j\leq3$. Its corresponding differential-difference linear
systems are \eqref{H-M-LP1-x} and \eqref{H-M-LP2-x} with $\tau_x/\tau=v$.

Together with the differential-difference linear system \eqref{H-M-LP1-x}, the Darboux transformation of the linear system \eqref{dpKP-LP-1} gives the new solution of the dpKP equation
\begin{equation*}
\t v=(\log(\theta\tau))_x=v+(\log\theta)_x,
\end{equation*}
where $\theta$ is a non-zero solution of \eqref{dpKP-LP-1} and \eqref{H-M-LP1-x}.  More generally, $N$-fold iteration gives the Wronskian solution
\begin{equation*}
v[N]=v+(\log W(\theta_1,\theta_2,\dots,\theta_N))_x,
\end{equation*}
where $\theta_k$, $k=1,2,\dots, N$ are the non-zero independent solutions of \eqref{dpKP-LP-1} and \eqref{H-M-LP1-x}.

The binary Darboux transformation gives a new solution of the dpKP equation
\begin{equation*}
\h  v=v+(\log\omega(\theta,\rho))_x,
\end{equation*}
where $\omega(\theta,\rho)$ is defined by \eqref{omega-1} and \eqref{omega-2}, and $\theta$, $\rho$ are non-zero solutions of \eqref{dpKP-LP-1}, \eqref{H-M-LP1-x} and \eqref{dpKP-LP-2}, \eqref{H-M-LP2-x}, respectively. Its $N$-fold iteration gives the Grammian solution
\begin{equation*}
v[N]=v+(\log|\bm \Omega(\bm\Theta,\bm P)|)_x,
\end{equation*}
where the $N\times N$ matrix $\bm \Omega(\bm\Theta,\bm P)=(\omega(\theta_i, \rho_j))_{i,j=1,\dots,N}$, $\omega(\theta_i,\rho_j)$ is defined by \eqref{omega-1} and \eqref{omega-2}, and $\theta_i$, $\rho_j$ are non-zero solutions of \eqref{dpKP-LP-1}, \eqref{H-M-LP1-x} and \eqref{dpKP-LP-2}, \eqref{H-M-LP2-x}, respectively.

%----------------------------------------------------------------------------------------------------------------
\section{Reductions to the dKdV and dpKdV equations}\label{dKdV equation}

\subsection{The dKP equation to the dKdV equation}

The discrete KdV (dKdV) equation is a 2-reduction of the dKP equation \eqref{dKP-u-form-a}. In the reduction, it is necessary that one takes $a_2+a_3=0$, since the solutions satisfy the reduction condition $\tau_{23}=\tau$, cf.\cite{Kajiwara-Ohta-2008}. The reduction condition for the eigenfunction of the linear system \eqref{H-M-LP1} and \eqref{H-M-LP1-x} is $\phi_{23}=(1-\lambda^2)\phi$, where the form of the coefficient is chosen for its correspondence with the discrete one-dimensional Schr\"{o}dinger equation \cite{ Case-1973, Willox-2010}. To realise this reduction, we make the ansatz
\begin{eqnarray}\label{reduction}
\tau(n_1,n_2,n_3) &=&\tau(n_1,n_2-n_3)\label{reduction-1}\\
\phi(n_1,n_2,n_3)&=&(1-\lambda^2)^{n_3}\phi(n_1,n_2-n_3). \label{reduction-2}
\end{eqnarray}
Now taking the reduction conditions, the dKP equation \eqref{dKP-u-form-a} can be written as either
\begin{subequations}\label{H-M-rewrite}
\begin{align}
(a_2-a_1)\frac{\tau_{12}\tau}{\tau_1\tau_2}+(a_2+a_1)\frac{\tau_{1\b2}\tau}{\tau_1\tau_{\b2}} &=2a_2\frac{\tau}{\tau_{\b2}}\frac{\tau}{\tau_2},\label{H-M-rewrite-1}\\
\intertext{or}
(a_2-a_1)\frac{\tau_{1}\tau_{\b2}}{\tau_{1\b2}\tau}+(a_2+a_1)\frac{\tau_{1}\tau_2}{\tau_{12}\tau} &=2a_2\frac{\tau_1}{\tau_{1\b2}}\frac{\tau_1}{\tau_{12}}.\label{H-M-rewrite-2}
\end{align}
\end{subequations}
We then express the above two equations in terms of nonlinear variable
\begin{equation}\label{u-redu0}
u(n_1,n_2):=\frac{\tau_1\tau_3}{\tau_{13}\tau}=\frac{\tau_{1}\tau_{\b 2}}{\tau_{1\b 2}\tau}.
\end{equation}
By eliminating the tau-function parts on the right hand sides in both \eqref{H-M-rewrite-1} and \eqref{H-M-rewrite-2}, we obtain
\begin{equation}\label{dKdV}
\frac{1}{u_1}-\frac{1}{u_2}=\frac{a_1-a_2}{a_1+a_2}\left (u_{12}-u\right),
\end{equation}
which is the discrete KdV equation, first found by Hirota \cite{Hirota-1977}.

\begin{subequations}\label{dKdV-LP1}
From the linear system \eqref{H-M-LP1} and using the 2-reduction, we get the linear system of the dKdV equation
\begin{align}
a_1\phi_1-a_2\phi_2&=(a_1-a_2)u_2\phi,\label{dKdV-LP1-1}\\
a_1\phi_1+a_2(1-\lambda^2)\phi_{\b 2}&=(a_1+a_2)\frac{1}{u}\phi.\label{dKdV-LP1-2}
\end{align}
Note here that, by eliminating the $\phi_1$ in these two equations, we have
\begin{equation*}
a_2\phi_2+a_2(1-\lambda^2)\phi_{\b2}=\left((a_2-a_1)u_2+(a_2+a_1)\frac{1}{u}\right)\phi,
\end{equation*}
which is a discrete one-dimensional Schr\"{o}dinger equation \cite{Case-1973, Willox-2010}.
\end{subequations}

\begin{subequations}\label{dKdV-LP2}
The dKdV equation \eqref{dKdV} is also invariant with respect to the  reflections $n_i\rightarrow -n_i$, for $i=1,2$. So applying the reflections to the system \eqref{dKdV-LP1} gives a new linear system of the dKdV equation
\begin{align}
a_1\psi_{\b1}-a_2\psi_{\b2}&=(a_1-a_2)u_{\b2}\psi,\\
a_1\psi_{\b1}+a_2(1-\lambda^2)\psi_{2}&=(a_1+a_2)\frac{1}{u}\psi.
\end{align}
\end{subequations}
This system could also be obtained from the linear system \eqref{H-M-LP2}, by using the 2-reduction $\tau_{\b2\b3}=\tau$, and $\psi_{\b2\b3}=(1-\lambda^2)\psi$ with $a_2+a_3=0$.

For the linear differential-difference equations \eqref{H-M-LP1-x} and \eqref{H-M-LP2-x}, the 2-reduction does not affect the $n_1$- or $n_2$- parts, but the $n_3$-parts become
\begin{align}
\phi_x&=-a_2(1-\lambda^2)\phi_{\b 2}+\left(\left(\frac{\tau_x}{\tau}\right)_{\b2}-\frac{\tau_x}{\tau}+a_2\right)\phi,\label{dKdV-LP1-x}\\
\psi_x&=a_2(1-\lambda^2)\psi_{2}+\left(\left(\frac{\tau_x}{\tau}\right)_{2}-\frac{\tau_x}{\tau}-a_2\right)\psi.\label{dKdV-LP2-x}
\end{align}\

The fundamental Darboux transformation for the dKdV equation is
\begin{proposition}\label{prop4}
\begin{subequations}\label{dKdV-DT1}
Suppose $\theta$ is a non-zero solution of the linear system \eqref{dKdV-LP1} and \eqref {dKdV-LP1-x} for some $u$ and $\tau$,  together with $u=\tau_{1}\tau_{\b 2}/\tau_{1\b 2}\tau$, then the transformations
\begin{equation}
\phi\rightarrow\t \phi=a_i(\phi_i-\theta_i\theta^{-1}\phi)=\phi_x-\theta_x\theta^{-1}\phi, \quad i=1,2,
\end{equation}
and
\begin{equation}
\tau\rightarrow \t\tau=\theta\tau, \quad u\rightarrow \t u=u\frac{\theta_1\theta_{\b2}}{\theta_{1\b2}\theta},
\end{equation}
leave \eqref{dKdV-LP1} and \eqref{dKdV-LP1-x} invariant.
\end{subequations}
\end{proposition}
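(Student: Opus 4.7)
The approach is to derive Proposition \ref{prop4} as a direct specialisation of Proposition \ref{prop1-HM} under the 2-reduction described at the start of this section. First I would identify the dKdV linear system as a reduction of the H-M linear system: \eqref{dKdV-LP1-1} is the $(i,j)=(1,2)$ case of \eqref{H-M-LP1} together with the identity $\tau_{12}\tau/\tau_1\tau_2=u_2$; \eqref{dKdV-LP1-2} is the $(i,j)=(1,3)$ case after imposing $a_3=-a_2$, $\tau_3=\tau_{\b 2}$ and $\phi_3=(1-\lambda^2)\phi_{\b 2}$; and \eqref{dKdV-LP1-x} is the $i=3$ case of \eqref{H-M-LP1-x}. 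Hence any seed $\theta$ admissible for Proposition \ref{prop4} also solves the full H-M system (with the same $\tau$) and inherits the eigenvalue constraint $\theta_{23}=(1-\lambda^2)\theta$.

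Proposition \ref{prop1-HM} then delivers $\t\phi=a_i(\phi_i-\theta_i\theta^{-1}\phi)=\phi_x-\theta_x\theta^{-1}\phi$ and $\t\tau=\theta\tau$, leaving \eqref{H-M-LP1} and \eqref{H-M-LP1-x} invariant. The cases $i=1,2$ and the derivative form give precisely the expressions for $\t\phi$ in Proposition \ref{prop4}. The 2-reduction is preserved on the eigenfunction side because
\begin{equation*}
\t\phi_{23}=a_i\bigl((1-\lambda^2)\phi_i-(1-\lambda^2)\theta_i\,\theta^{-1}\phi\bigr)=(1-\lambda^2)\t\phi,
\end{equation*}
and although $\t\tau_{23}=(1-\lambda^2)\t\tau$ rather than $\t\tau$ itself, the extra $(1-\lambda^2)^{n_3}$ factor cancels in the ratio defining $u$, yielding the claimed rule
\begin{equation*}
\t u=\frac{\t\tau_1\t\tau_{\b 2}}{\t\tau_{1\b 2}\t\tau}=u\,\frac{\theta_1\theta_{\b 2}}{\theta_{1\b 2}\theta}.
\end{equation*}

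The invariance claims then follow directly from Proposition \ref{prop1-HM}. Equation \eqref{dKdV-LP1-1} is inherited from the $(i,j)=(1,2)$ H-M equation after observing that $\t\tau_{12}\t\tau/\t\tau_1\t\tau_2=\t u_2$, which is simply the $n_2$-shift of the formula for $\t u$. Equation \eqref{dKdV-LP1-2} is inherited from the $(i,j)=(1,3)$ H-M equation after substituting $\t\phi_3=(1-\lambda^2)\t\phi_{\b 2}$ and $\t\tau_3=(1-\lambda^2)\theta_{\b 2}\tau_{\b 2}$; the common $(1-\lambda^2)$ cancels on both sides and the right-hand side collapses to $(a_1+a_2)\t\phi/\t u$. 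Finally \eqref{dKdV-LP1-x} is immediate from the $i=3$ case of \eqref{H-M-LP1-x}. The only real obstacle is bookkeeping: ensuring that the $(1-\lambda^2)$-factors introduced by the reduction always cancel on both sides of each transformed equation, which is automatic since $\phi$ and $\theta$ satisfy the same eigenvalue condition.
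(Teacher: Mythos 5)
Your proposal is correct and follows essentially the same route as the paper's own proof, namely deriving the result by specialising the Hirota--Miwa Darboux transformation of Proposition \ref{prop1-HM} under the 2-reduction $a_3=-a_2$, $\tau_3=\tau_{\b 2}$, $\phi_3=(1-\lambda^2)\phi_{\b 2}$. You actually spell out two details the paper leaves implicit --- that $\t\phi$ inherits the eigenvalue condition $\t\phi_{23}=(1-\lambda^2)\t\phi$, and that the factor $(1-\lambda^2)$ picked up by $\t\tau$ cancels in the ratio defining $\t u$ --- but the argument is the same.
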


\begin{proof}
One way to complete the proof is a straightforward computation. A second way is using the idea that the Darboux transformation of the dKP equation in the Proposition \ref{prop1-HM} also works for the dKdV equation after taking the 2-reduction.  From the linear system \eqref{dKdV-LP1}, $\phi$ and $\theta$  are its solutions, so we have
\begin{equation}\label{dKdV-DT1-1}
a_i(\phi_i-\theta_i\theta^{-1}\phi)=-a_2(1-\lambda^2)(\phi_{\b2}-\theta_{\b2}\theta^{-1}\phi),
\end{equation}
for $i=1,2$. On the other hand, the 2-reduction gives
\begin{equation}\label{dKdV-DT1-2}
a_3(\phi_3-\theta_3\theta^{-1}\phi)=-a_2(1-\lambda^2)(\phi_{\b2}-\theta_{\b2}\theta^{-1}\phi).
\end{equation}
For the potential $u$, we have
$$\t u=\frac{\t\tau_{1}\t\tau_{\b 2}}{\t\tau_{1\b 2}\t\tau}=u\frac{\theta_1\theta_{\b2}}{\theta_{1\b2}\theta}.$$
So together with the transformation in  \eqref{H-M-DT1-1-step}, it means that after taking 2-reduction, the transformations \eqref{dKdV-DT1} leave the linear system \eqref{dKdV-LP1} and \eqref{dKdV-LP1-x} invariant.
\end{proof}

Similarly, by using the reflections $n_i\rightarrow -n_i$ and $x\rightarrow -x$, or the 2-reduction of the Darboux transformation of the dKP equation in Proposition \ref{prop2-HM}, we get the Darboux transformation of the linear system \eqref{dKdV-LP2} and \eqref{dKdV-LP2-x}.
\begin{proposition}\label{prop5}
\begin{subequations}\label{dKdV-DT2}
Suppose $\rho$ is a non-zero solution of the linear system \eqref{dKdV-LP2} and \eqref {dKdV-LP2-x} for some $u$ and $\tau$,  together with $u=\tau_{\b1}\tau_{2}/\tau_{\b12}\tau$, then the transformations
\begin{gather}
\psi\rightarrow\t \psi=a_i(\psi_{\b i}-\rho_{\b i}\rho^{-1}\psi)=\psi_x-\rho_x\rho^{-1}\psi, \quad i=1,2,\\
\intertext{and}
\tau\rightarrow \t\tau=\rho\tau, \quad u\rightarrow \t u=u\frac{\rho_{\b1}\rho_{2}}{\rho_{\b12}\rho},
\end{gather}
leave \eqref{dKdV-LP2} and \eqref {dKdV-LP2-x} invariant.
\end{subequations}
\end{proposition}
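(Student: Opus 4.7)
The strategy is to follow the two-route argument used in the proof of Proposition \ref{prop4}. The most economical route is to take the 2-reduction of the Darboux transformation for the second linear system of the Hirota-Miwa equation already supplied by Proposition \ref{prop2-HM}, and to check that the reduction conditions $\tau_{\b 2\b 3}=\tau$ and $\psi_{\b 2\b 3}=(1-\lambda^2)\psi$ (with the same condition imposed on $\rho$) are consistent with the transformation.

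First I would verify that the candidate expressions
\[
a_i(\psi_{\b i}-\rho_{\b i}\rho^{-1}\psi),\quad i=1,2,\qquad\text{and}\qquad \psi_x-\rho_x\rho^{-1}\psi,
\]
all agree. The $i=1,2$ equality follows by subtracting \eqref{dKdV-LP2} applied to $\rho$ (multiplied by $\rho^{-1}\psi$) from \eqref{dKdV-LP2} applied to $\psi$, in direct analogy with \eqref{dKdV-DT1-1}. Agreement with the differential form comes from \eqref{dKdV-LP2-x} applied to $\psi$ and $\rho$. To invoke Proposition \ref{prop2-HM} one additionally needs the $i=3$ expression $a_3(\psi_{\b 3}-\rho_{\b 3}\rho^{-1}\psi)$ to match; here the 2-reduction identities $\psi_{\b 3}=(1-\lambda^2)\psi_2$ and $\rho_{\b 3}=(1-\lambda^2)\rho_2$ together with $a_3=-a_2$ collapse it to $-a_2(1-\lambda^2)(\psi_2-\rho_2\rho^{-1}\psi)$, which is precisely the common value of the $i=1,2$ expressions by the second equation of \eqref{dKdV-LP2}.

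With these identifications in place, Proposition \ref{prop2-HM} delivers the invariance of \eqref{H-M-LP2} and \eqref{H-M-LP2-x} under $\psi\to\t\psi$ and $\tau\to\t\tau=\rho\tau$; restricting to the 2-reduced slice then gives the invariance of \eqref{dKdV-LP2} and \eqref{dKdV-LP2-x}. The transformation of the nonlinear potential is obtained by direct calculation from $u=\tau_{\b 1}\tau_2/(\tau_{\b 1 2}\tau)$:
\[
\t u = \frac{\t\tau_{\b 1}\,\t\tau_2}{\t\tau_{\b 1 2}\,\t\tau} = \frac{\rho_{\b 1}\rho_2}{\rho_{\b 1 2}\rho}\cdot\frac{\tau_{\b 1}\tau_2}{\tau_{\b 1 2}\tau} = u\,\frac{\rho_{\b 1}\rho_2}{\rho_{\b 1 2}\rho}.
\]

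The delicate point is the compatibility of the 2-reduction with the transformation: since $\rho$ itself satisfies $\rho_{\b 2\b 3}=(1-\lambda^2)\rho$, the transformed $\t\tau=\rho\tau$ and $\t\psi$ acquire an overall $(1-\lambda^2)$ factor under a combined backward shift in $n_2$ and $n_3$. This factor is harmless because it drops out of $\t u$, which only involves $n_1$- and $n_2$-shifts, and because it can be absorbed into the remaining dKdV data by a $(1-\lambda^2)^{\mp n_3}$ gauge transformation. An equally short alternative that bypasses this book-keeping altogether is to apply the composite reflection $n_i\to -n_i$, $x\to -x$ to Proposition \ref{prop4}: this interchanges the roles of the two linear systems and, together with the identification $u=\tau_{\b 1}\tau_2/(\tau_{\b 1 2}\tau)$, converts its statement directly into that of Proposition \ref{prop5}.
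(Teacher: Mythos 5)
Your proposal is correct and follows essentially the same route as the paper, which derives Proposition \ref{prop5} either by applying the reflections $n_i\to-n_i$, $x\to-x$ to Proposition \ref{prop4} or by taking the 2-reduction of Proposition \ref{prop2-HM}; your detailed verification of the $i=3$ expression mirrors the paper's explicit argument for Proposition \ref{prop4}. The only cosmetic point is that the difference and differential forms of $\t\psi$ agree up to an overall sign, which (as the paper notes after Proposition \ref{prop2-HM}) is immaterial by linearity.
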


Similarly, in the light of the reductions, the binary Darboux transformations of the dKdV equation can also be obtained directly from the one in Proposition \ref{prop12-HM}. Thus the results of the $N$-applications of the Darboux and binary Darboux transformations for the dKdV equation can be gotten from the ones in Propositions \ref{prop1N-HM}, \ref{prop2N-HM} and \ref{prop12N-HM}. For this reason, we will not go to talk about them in detail here.

%-----------------------------------------------------------------------------------------------------------------
\subsection{The dpKP equation to the dpKdV equation}

Here again we use the 2-reduction, $v_{23}=v$ and $a_2+a_3=0$. Now by introducing the potential $v(n_1,n_2;x):=\tau_x/\tau$ into \eqref{trans.-dpKP-dKP}, we get
\begin{subequations}\label{tran-dkdv-dpkdv}
\begin{eqnarray}
(a_1-a_2)u_2 &=& v_2-v_1+a_1-a_2, \label{tran-dkdv-dpkdv-1}\\
(a_1+a_2)\frac{1}{u} &=& v_{\b 2}-v_1+a_1+a_2. \label{tran-dkdv-dpkdv-2}
\end{eqnarray}
\end{subequations}
By eliminating the potential $u$ from \eqref{tran-dkdv-dpkdv}, we obtain
\begin{equation}\label{dpKdV}
(v_2-v_1+a_1-a_2)(v-v_{12}+a_1+a_2)=a_1^2-a_2^2.
\end{equation}
This is the discrete potential KdV (dpKdV) equation \cite{Nijhoff-Notes-2,Nijhoff-2009}, and we see that its solution can be written as $v=(\log\tau)_x$. The dpKdV equation could also be obtained by the permutability property of the B\"{a}cklund transformations of the continuous potential KdV equation,  cf.\cite{Nijhoff-Notes-2}.  Taking the potential transformation $v\rightarrow v+a_1n_1+a_2n_2+\gamma$, where $\gamma$ is an arbitrary constant, \eqref{dpKdV} becomes
\begin{equation}\label{H1}
(v_2-v_1)(v-v_{12})=a_1^2-a_2^2,
\end{equation}
which is called the H1 equation\cite{ABS-2002}.

By taking the 2-reduction, the linear system of the dpKP equation \eqref{dpKP-LP-1} and \eqref{dpKP-LP-2} give the Lax pairs of the dpKdV equation \eqref{dpKdV},
\begin{subequations}\label{H1-LP1}
\begin{eqnarray}
a_1\phi_1-a_2\phi_2&=&(v_2-v_1+a_1-a_2)\phi,\\
a_1\phi_1+a_2(1-\lambda^2)\phi_{\b 2}&=&(v_{\b2}-v_1+a_1+a_2)\phi,
\end{eqnarray}
\end{subequations}
and with reflections $n_i\rightarrow -n_i$, this also gives
\begin{subequations}\label{H1-LP2}
\begin{eqnarray}
a_1\psi_{\b1}-a_2\psi_{\b2}&=&(v_{\b2}-v_{\b1}+a_1-a_2)\psi,\\
a_1\psi_{\b1}+a_2(1-\lambda^2)\psi_{2}&=&(v_{2}-v_{\b1}+a_1+a_2)\psi.
\end{eqnarray}
\end{subequations}
Their corresponding differential-difference linear systems are \eqref{H-M-LP1-x} and \eqref{H-M-LP2-x} with $\tau_x/\tau=v$, for $i=1,2$, respectively. Note here that the Lax pair \eqref{H1-LP1} coincides with the one derived from the multidimensional consistency property, cf.\cite{Nijhoff-Notes-2,Nijhoff-2009,Nalini-2010}, by using appropriate parameters transformations, such as $r=a_2\lambda $.

As we showed for the dpKP equation, the Darboux transformation gives the new solution of the dpKdV equation in differential form,
\begin{equation*}
\t v=v+(\log\theta)_x,
\end{equation*}
where $\theta$ is the non-zero solution of \eqref{H1-LP1} and \eqref{H-M-LP1-x}. Its $N$-fold iteration gives the solution in Wronskian form
\begin{equation} \label{N-soliton-dpKdV}
v[N]=v+(\log W(\theta_1,\theta_2,\dots,\theta_N))_x,
\end{equation}
where $\theta_k$, for $k=1,2,\dots,N$, are the non-zero solutions of \eqref{H1-LP1} and \eqref{H-M-LP1-x}.

The soliton solutions of the dpKdV equation \eqref{dpKdV} can be obtained from \eqref{N-soliton-dpKdV}. From the linear system \eqref{H1-LP1} and \eqref{H-M-LP1-x}, by  choosing the seed solution $v=0$, the eigenfunctions of the $\tau$-function in Wronskian determinant are
\begin{equation*}
\theta_k(n_1,n_2)=e^{a_2\lambda_{\!_k}x}(1+\lambda_k\beta)^{n_1}(1+\lambda_k)^{n_2}
+e^{-a_2\lambda_{\!_k}x}(1-\lambda_k\beta)^{n_1}(1-\lambda_k)^{n_2}, \quad \beta=\frac{a_2}{a_1},
\end{equation*}
for $k=1,2,\dots,N$. This result coincides with the one given by Hietarinta and Zhang \cite{JH-DJZ-2009}. In their paper,  they employed $f=|\h{N-1}|$ and $g=|\h{N-2},N|$, which were given as Casoratian determinants with respect to an auxiliary discrete variable. This auxiliary variable is compatible with the original independent variables, $n_1$ and $n_2$, but is external to the system. We will denote this auxiliary variable to be $n_4$ here. Then for an arbitrary constant $c$, $v=\dfrac{g}{f}+c$ satisfies the dpKdV equation \eqref{dpKdV}. $f$ and $g$ in Casoratian determinants in compact form are
\begin{align*}
f&=|\h{N-1}|=|\phi, T_{n_4}(\phi),T_{n_4}^2(\phi),\dots,T_{n_4}^{N-1}(\phi)|,\\
\intertext{and}
g&=|\h{N-2},N|=|\phi, T_{n_4}(\phi),T_{n_4}^2(\phi),\dots,T_{n_4}^{N-2}(\phi),T_{n_4}^{N}(\phi)|,
\end{align*}
where $\phi=\phi(n_1,n_2, n_4,x)$ and $T_{n_4}^j(\phi)=\phi(n_1,n_2, n_4+j,x)$, $j=0,1,\dots,N$. Similarly, $\tau$ and $\tau_x$ in Wronskian determinants are
\begin{align*}
\tau&=|\h{N-1}|=|\phi, \partial_x\phi,\partial_x^2\phi,\dots,\partial_x^{N-1}\phi|,\\
\intertext{and}
\tau_{x}&=|\h{N-2},N|=|\phi, \partial_x\phi,\partial_x^2\phi,\dots,\partial_x^{N-2}\phi,\partial_x^{N}\phi|.
\end{align*}
So for the linear systems \eqref{H1-LP1} and \eqref{H-M-LP1-x} with $\tau_x/\tau=v$, we could also introduce a virtual discrete variable $n_4$, say, which is another compatible discrete flow, with parameter $a_4$. If we choose the seed solution of the dpKdV equation $v=0$, then the differential-difference linear system \eqref{H-M-LP1-x}, with respect to the discrete variable $n_4$, gives the entries of the above Wornsikan and Carsoratian determinants satisfying the relations
\begin{align*}
\phi_x&=a_4(T_{n_4}-1)(\phi),
\intertext{and}
\partial_x^{N}\phi&=a_4^N(T_{n_4}-1)^N(\phi).
\end{align*}
So we have
\begin{equation*}
\tau=a_4^{\frac{N(N-1)}{2}}f,\quad \tau_{x}=a_4^{\frac{N(N-1)}{2}+1}(g-Nf).
\end{equation*}
By taking $a_4=1$, the soliton solution of the dpKdV equation is
\begin{equation*}
v=(\log \tau)_x=\frac{\tau_x}{\tau}=\dfrac{g}{f}-N.
\end{equation*}
So if the soliton solution of the dpKdV equation is expressed in Wronskian determinant, by using a compatible continuous variable, then it needs to employ $\tau$ and its derivative $\tau_x$. If it is expressed in Casoratian determinant, through a virtual discrete variable $n_4$, then it needs to employ $f$ and $g$. But there is no direct relation between $f$ and $g$.

The binary Darboux transformation gives the exact solution of the dpKdV equation,
\begin{equation*}
\h  v=v+(\log\omega(\theta,\rho))_x,
\end{equation*}
where $\omega(\theta,\rho)$ is defined by \eqref{omega-1} and \eqref{omega-2}, and $\theta$, $\rho$ are non-zero solutions of \eqref{H1-LP1}, \eqref{H-M-LP1-x} and \eqref{H1-LP2}, \eqref{H-M-LP2-x}, respectively, with $\tau_x/\tau =v$. Its $N$-fold iteration gives the exact solution in Grammian form,
\begin{equation*}
v[N]=v+(\log|\bm \Omega(\bm\Theta,\bm P)|)_x,
\end{equation*}
where the $N\times N$ matrix $\bm \Omega(\bm\Theta,\bm P)=(\omega(\theta_i, \rho_j))_{i,j=1,\dots,N}$,  $\omega(\theta_i,\rho_j)$ is defined by \eqref{omega-1} and \eqref{omega-2}, and $\theta_i$, $\rho_j$ are non-zero solutions of \eqref{H1-LP1}, \eqref{H-M-LP1-x} and \eqref{H1-LP2}, \eqref{H-M-LP2-x}, respectively, with $\tau_x/\tau =v$.

\section{Conclusions}
In this paper, we write the Hirota-Miwa equation in `nonlinear' form in two ways: the discrete KP equation in terms of the variable $u^{ij}=\tau_{ij}\tau/\tau_i\tau_j$ and, by using a compatible continuous variable, the discrete potential KP equation in variable $u^{ij}=(v_j-v_i+(a_i-a_j))/(a_i-a_j)$, where $v=(\log \tau)_{x}$.  This leads us to see clearly the relationship between the dKP and dpKP equations, which is similar to the relationship $u=v_x$ in the continuous case. For both systems, by introducing differential-difference linear systems, we get the Darboux and binary Darboux transformations and exact solutions in Wronskian and Grammian form.  In the second part of the paper, we discuss in detail the 2-reduction of Hirota-Miwa equation, and in particular, of the dKP to the dKdV equation and of the dpKP to the dpKdV equation. Then by taking appropriate reductions of the results for the dKP and dpKP equations, respectively, we derive the Lax pairs for the dKdV and dpKdV equations, and hence their Darboux and binary Darboux transformations and exact solutions.

\section*{Acknowledgements}
One of the authors (Y. Shi) is a PhD candidate supported by a fund under the State Scholarship Fund organised by the China Scholarship Council (CSC). She would also like to thank Duncan McCoy for helpful discussions.  This project is partially supported by the NSF of China (No. 11071157), the SRF of the DPHE of China (No. 20113108110002) and the Project of ``First-class Discipline of Universities in Shanghai''.
%---------------------------------------------------------------------------------------------------------------------------------------------

%---------------------------------------------------------------------------------------------------------------------------------------------

\end{document}